\newcommand\qedsymbol{\hbox{\rlap{$\sqcap$}$\sqcup$}}
\newenvironment{proof}{\textbf{Proof.}}{\\ \mbox{ }\hfill\qedsymbol\\}
\renewcommand{\im}{\mathop{\mathrm{Im}}}
\renewcommand{\re}{\mathop{\mathrm{Re}}}
\newcommand{\D}{\mathrm{D}}
\renewcommand{\Sigma}{\mathfrak{S}}
\newcommand{\I}{\mathrm{I}}
\newcommand{\upi}{\uppi}
\author{M. Seredy\'{n}ska\footnote{Corresponding Author}\\
Institute of Fundamental Technological Research\\
Polish Academy of Sciences\\
 ul. Pawi\'{n}skiego 5b\\
02-106 Warszawa, PL\\
email: {\tt msered@ippt.gov.pl}\\ \mbox{ }\\
Andrzej Hanyga\\
freelance scientist\\
ul. Bitwy Warszawskiej 14 m. 52,\\
02-366 Warszawa, PL\\
email: {\tt ajhbergen@yahoo.com}} 
\title{Attenuation, dispersion and finite propagation speed in
viscoelastic media}
\date{}
\begin{document}

\maketitle

\noindent\textbf{Keywords:} viscoelasticity, wave propagation, dispersion,
attenuation, bio-tissues, polymers
\begin{abstract}
It is shown that the dispersion and attenuation functions in a linear 
viscoelastic medium 
with a positive relaxation spectrum have a sublinear growth rate at very high 
frequencies. A local dispersion relation in parametric form is found. The exact limit 
between attenuation growth rates compatible and incompatible with finite propagation speed
is found. 
 
Incompatibility
of superlinear frequency dependence of attenuation with finite speed of 
propagation and with the assumption of positive relaxation spectrum is 
demonstrated. 
\end{abstract}

\vspace{0.5cm}

\noindent\textbf{List of symbols.}\\
\begin{tabular}{lll}
$\tilde{f}(p)$ & $\int_0^\infty f(t) \, \e^{-p t} \, \dd t$ & Laplace transform;\\
$\theta(t)$ &  & unit step function;\\
$t_+^\alpha$ & $\theta(t) \vert t \vert^\alpha$ & homogeneous distribution.\\
$\D^\alpha_{\mathrm{C}}$ & $\I^{n-\alpha}\, \D^n$ & Caputo fractional derivative
\end{tabular}

\section{Introduction.}

A very elegant theory 
linking attenuation and dispersion is presented for viscoelastic media 
with positive relaxation spectrum. 
All the attenuation and dispersion functions compatible with the theory 
are represented by simple integral representations. The attenuation function and
the dispersion function are both expressed as transforms of a positive measure
(the dispersion-attenuation measure). The measure is arbitrary except for a 
very mild growth condition. These expressions can be considered as a 
dispersion relation in parametric form. 

In contrast the acoustic Kramers-Kronig dispersion \cite{WeaverPao81} relations are non-local. 
They express the dispersion function in terms of the attenuation function or 
conversely. This presupposes that one of these functions (usually the 
dispersion function) is very accurately known and consistent with the basic 
assumptions of the theory. On the other hand, substituting any positive 
measure in the parametric dispersion relation yields an admissible
(compatible with the theory) dispersion and attenuation. The origin of 
the Kramers-Kronig dispersion relations is unclear. In electromagnetic 
theory they follow from the causality of the time-domain kernel representing 
the dielectric constant in the dispersive case.
In acoustics they follow from an ad hoc assumption about the analytic properties
of the wave number. It is namely assumed that the wave number is the 
Fourier transform of a causal function or distribution. The physical meaning 
of the causal function or distribution is unclear hence the justification of 
the acoustic Kramers-Kronig dispersion relation is missing. Various 
inequalities imposed on the complex continuation of the wave number cannot 
be expressed in terms of a constitutive assumption. Hence
the acoustic Kramers-Kronig dispersion relation are an ad hoc addition to the 
constitutive equation, often incompatible with it.

It will be also shown that in media with non-negative 
relaxation spectrum the frequency dependence of the attenuation function 
in the high frequency range is sublinear. In the case of power law attenuation 
the attenuation and dispersion are proportional to a power of frequency 
$\vert \omega\vert^\alpha$. Numerous experiments in acoustics indicate
that the power-attenuation law accurately represents the frequency dependence
of dispersion and attenuation over several decades of frequency. 
The theory based on positive relaxation spectrum implies that 
$0 \leq \alpha < 1$.  
This seems to contradict the experimental ultrasound investigations of 
numerous materials which point to higher
values of the exponent in the power law attenuation. For example in 
ultrasound investigations of soft 
tissues the exponent varies between 1 and 1.5, while in some viscoelastic fluids
such as castor oil it lies between 1.5 and 2.
We shall refer to this case as superlinear frequency dependence.
Typical values of the power-law exponent in medical applications using 
ultrasound transducers 
are $\alpha = 1.3$ in bovine liver for 1--100~MHz, $\alpha = 1-2$ in
human myocardium and other bio-tissues \cite{SzaboWu00}.
$\alpha = 1.66$ in castor oil at ca 250~MHz. Values in the range 1--2
are observed at lower frequencies in aromatic polyurethanes 
\cite{GuessCampbell95}. Nearly linear frequency dependence of attenuation 
is well documented in seismology \cite{Futterman}. Approximately 
linear frequency dependence of attenuation has been observed in 
geological materials in the range 140~Hz to 2.2~MHz. 

Several papers have been
devoted to a theoretical underpinning of the superlinear dispersion-attenuation 
models 
\cite{Szabo1,Szabo2,ChenHolm03,WatersHughersBrandenburgerMiller,CobboldSushilovWeathermon04,KellyMcGoughMeerschaert08}.
In order to resolve some problems Chen and Holm \cite{ChenHolm04} suggested to add a fractional Laplacian
of order $0 < y < 2$ of the velocity field to the usual Laplacian of the displacement 
field in the equations of motion.  Their paper still leads to an unbounded sound speed for $y > 1$ 
and adds a new problem: the equation of motion does not have the form of a viscoelastic 
equation of motion. 

Sublinear power law attenuation 
($\alpha < 1$) has also been reported in experimental investigations  
\cite{RibodettiHanygaGJI}. Non-power law attenuation laws are usually 
derived from the constitutive laws. Investigations of creep and relaxation 
in viscoelastic materials 
always support the assumption of positive relaxation spectrum 
(e.g. \cite{Andrade1,Andrade2} for creep in metals, \cite{MinsterAnderson}
for the upper mantle with $\alpha = 1/3$) and therefore models derived 
from constitutive relations exhibit sublinear attenuation and dispersion 
at high frequencies. In particular this applies to the Cole-Cole, 
Havriliak-Negami and 
Cole-Davidson and Kohlrausch-Williams-Watts relaxation laws commonly applied 
in phenomenological rock mechanics, polymer
rheology \cite{BagleyTorvik4}, bio-tissue mechanics (e.g. for bone 
collagen \cite{SasakiAl93}) as well as for ionic glasses \cite{CariniAl84}.

Another abnormal feature of wave propagation in media with superlinear 
power laws (and more generally in media with superlinear asymptotic 
frequency dependence) is appearance of precursors. The precursors extend 
to infinity and thus the 
speed of propagation of disturbances is infinite. Finite speed of wave 
propagation requires that in the high-frequency range the exponent of the 
power law does not exceed 1. 

It is a very challenging problem how to explain the incompatibility 
between the theory and experiment in the superlinear case. It seems likely 
that the attenuation 
observed at ultrasound frequencies significantly differs from the
asymptotic behavior of attenuation at the frequency tending to infinity. 

One might surmise that the frequency range of ultrasound measurements is still
far below the asymptotic high frequency range in which different mechanisms
are at play. This suggests studying models of attenuation with a slowly 
varying power-law exponent.

\section{Constitutive assumptions and basic definitions.}

In viscoelasticity the relaxation modulus $G$, 
defined by the constitutive stress-strain relation
\begin{equation} \label{eq:0}
\upsigma(t) = \int_0^t G(t-s) \, \dot{e}(s)\, \dd s
\end{equation}
is assumed to have positive relaxation spectral density $h$. The latter 
statement means that for every $t > 0$ 
\begin{equation} \label{eq:1}
G(t) = \int_0^\infty \e^{-t r}\, h(r)\, \dd \ln(r), \qquad t > 0
\end{equation}
where $r = 1/\tau$ is the inverse of the relaxation time and $h(r) \geq 0$. 
Eq.~\eqref{eq:1} represents 
a superposition of a continuum of Debye elements. For mathematical convenience 
eq.~\eqref{eq:1} will be replaced by a more general equation
\begin{equation} \label{eq:2}
G(t) = \int_{[0,\infty\;[} \e^{-t r}\,\mu(\dd r), \qquad t > 0
\end{equation} 
where $\mu$ is a positive measure:
$\mu([a,b]) \geq 0$ for every interval $[a,b]$ of the positive real axis. 
As indicated in the subscript of the integral sign the range of integration 
is the set of reals satisfying the inequality $0 \leq r < \infty$. In general 
the measure $\mu(\{0\})$ of the one-point set $\{0\}$ is finite and equal
to the equilibrium modulus $G_\infty := \lim_{t\rightarrow\infty} G(t)$. 
An additional assumption
\begin{equation}
\int_{[0,\infty[\;} \frac{\mu(\dd r)}{1 + r} < \infty
\end{equation} 
ensures that $G$ is integrable over $[0,1]$, \cite{HanDuality}. The 
relaxation modulus assumes 
a finite value $G(0) = M$ at 0 if the measure $\mu$ has a finite mass $M$.

The right-hand side of eq.~\eqref{eq:2} 
can be replaced by a Stieltjes integral with respect to the function $g(r) = 
\mu([0,r])$:
\begin{equation} \label{eq:3}
G(t) = \int_{[0,\infty\;[} \e^{-t r}\,\dd g(r)
\end{equation} 
The function $g$ is non-decreasing and right-continuous: 
$g(r) = \mu([0,r]) = \lim_{\varepsilon \rightarrow 0+} \mu([0,r+\varepsilon])$,
and $g(r) = 0$ for $r < 0$.

In contrast to eq.~\eqref{eq:1} the integral representations \eqref{eq:2} 
and \eqref{eq:3} include as special cases finite spectra of relaxation times 
corresponding to superpositions of a finite number of Debye 
elements 
\begin{equation} \label{eq:4}
G(t) = \sum_{n=1}^N c_n \, \e^{-r_n\, t}, \qquad c_n > 0, r_n \geq 0 
\quad\text{for $n = 1,\ldots,N$}
\end{equation}
(Prony sums), infinite discrete spectra (corresponding to Dirichlet series) as well 
as discrete spectra embedded in continuous spectra.
However the main advantage of \eqref{eq:2} over \eqref{eq:1} is the 
availability of very powerful mathematical theory which ensures logical 
equivalence of certain statements about material response functions. 

In particular a function satisfying \eqref{eq:2} is completely monotone. A function $G$ is
said to be \emph{completely monotone} if it continuously differentiable to any order and
\begin{equation} \label{eq:CM}
(-1)^n \, \D^n \, G(t) \geq 0 \qquad \text{for all non-negative integers 
$n$ and $t > 0$}
\end{equation}
Bernstein's theorem \cite{WidderLT,Jacob01I} 
asserts that eq.~\eqref{eq:2} is {\em equivalent} to \eqref{eq:CM}.
There is no such simple characterization of functions which have the form
\eqref{eq:1}. 

It was established in \cite{Bland:VE,Molinari,HanDuality} that viscoelastic 
relaxation moduli are completely monotone.

For us the main benefit from using \eqref{eq:2} instead of \eqref{eq:1} 
is the equivalence 
of \eqref{eq:2} with a property of the dispersion and attenuation that will be
explained below. That is, certain statements about attenuation and dispersion 
functions follow from \eqref{eq:2} or, conversely, imply that \eqref{eq:2}
does not hold.

We now recall the definition of a Bernstein function \cite{BergForst}. A 
differentiable 
function $f$ on $]0,\infty[$ is a Bernstein function if $f \geq 0$ and 
its derivative 
$f^\prime$ is completely monotone. A Bernstein function is non-negative, 
continuous 
on $]0,\infty[$ and non-decreasing, hence it has a finite value at 0. A 
function $f$ on $[0,\infty[$ which has the form $f(x) = x^2\, \tilde{g}(x)$ 
for some Bernstein function $g$ is called a \emph{complete Bernstein function} (CBF) 
\cite{Jacob01I}. It can be
proved that every complete Bernstein function is a Bernstein function
\cite{Jacob01I}. 

The following facts about complete Bernstein functions will be needed here. 
\begin{theorem} \label{thm:J}
A real function $f$ on $]0,\infty[$ is a complete Bernstein function if (i) $f$ 
has an analytic continuation $f(z)$
to the complex plane cut along the negative real axis; (ii) $f(0) \geq 0$, (iii) 
$f\left(\overline{z}\right) = \overline{f(z)}$, (iv) $\im f(z) \geq 0$ in the
upper half plane $\im z \geq 0$.
\end{theorem}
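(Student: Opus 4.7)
The strategy is to recognise $f$ as a Herglotz (Nevanlinna--Pick) function on the upper half plane, apply the classical Nevanlinna--Herglotz integral representation, exploit the analyticity across $]0,\infty[$ furnished by (i) to concentrate the representing measure on $]-\infty,0]$, and finally reorganise the resulting Stieltjes-type formula into the form $x^{2}\tilde g(x)$ for a suitable Bernstein function $g$.

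\textbf{Step 1.} From (i) and (iv), $f$ is holomorphic in the open upper half plane with $\im f\geq 0$ there. The classical Herglotz representation then yields
\begin{equation*}
f(z)=\alpha+\beta z+\int_{\mathbb R}\Bigl(\frac{1}{t-z}-\frac{t}{1+t^{2}}\Bigr)\,\nu(\dd t),
\end{equation*}
with $\alpha\in\mathbb R$, $\beta\geq 0$, and a positive Borel measure $\nu$ on $\mathbb R$ satisfying $\int(1+t^{2})^{-1}\,\nu(\dd t)<\infty$.

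\textbf{Step 2.} The Schwarz reflection condition (iii) combined with the analyticity across $]0,\infty[$ from (i) forces the non-tangential boundary values of $\im f$ on $]0,\infty[$ to vanish. Stieltjes' inversion applied to the Herglotz formula then gives $\nu(]0,\infty[)=0$, so $\nu$ is supported on $]-\infty,0]$. Substituting $r=-t\geq 0$ and absorbing the compensating terms into the constants recasts the representation as
\begin{equation*}
f(z)=a+b z+\int_{[0,\infty\;[}\frac{z}{z+r}\,\sigma(\dd r),\qquad a,b\geq 0,\ \ \int\frac{\sigma(\dd r)}{1+r}<\infty.
\end{equation*}
Condition (ii) then gives $a=f(0)\geq 0$.

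\textbf{Step 3.} Dividing by $z^{2}$ and using $(z(z+r))^{-1}=r^{-1}(1/z-1/(z+r))$ for $r>0$, separating the possible atom at $r=0$,
\begin{equation*}
\frac{f(z)}{z^{2}}=\frac{a}{z^{2}}+\frac{b+\sigma(\{0\})}{z}+\int_{]0,\infty[}\Bigl(\frac{1}{z}-\frac{1}{z+r}\Bigr)\frac{\sigma(\dd r)}{r}.
\end{equation*}
Since $1/z^{2}$, $1/z$ and $1/z-1/(z+r)$ are the Laplace transforms of $t$, $1$ and $1-\e^{-rt}$ respectively, this equals $\tilde g(z)$ with
\begin{equation*}
g(t)=b+\sigma(\{0\})+a\,t+\int_{]0,\infty[}(1-\e^{-rt})\,\frac{\sigma(\dd r)}{r}.
\end{equation*}
The derivative $g'(t)=a+\int_{]0,\infty[}\e^{-rt}\,\sigma(\dd r)$ is completely monotone by Bernstein's theorem, and $g\geq 0$, so $g$ is a Bernstein function and $f(x)=x^{2}\tilde g(x)$ is a complete Bernstein function.

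\textbf{Main obstacle.} The delicate point is Step 2: extracting both the support localisation $\supp\nu\subset\,]-\infty,0]$ and the sharp growth control $\int\sigma(\dd r)/(1+r)<\infty$ from the abstract analyticity/positivity conditions requires the Stieltjes inversion formula together with careful accounting of boundary behaviour at the origin, including identification of the finite atom $\sigma(\{0\})$ with the correct share of $f(0)$ prescribed by (ii). Once this localisation is secured, Step 3 is a routine Laplace-transform matching.
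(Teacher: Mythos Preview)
The paper does not prove Theorem~\ref{thm:J}; it is quoted as a known characterisation (the relevant reference being \cite{Jacob01I}), so there is no ``paper's own proof'' to compare against. Your route via the Nevanlinna--Herglotz representation followed by Stieltjes inversion to push the representing measure onto $]-\infty,0]$ is exactly the standard argument one finds in that literature, and the overall architecture is sound.

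Two small corrections. First, in Step~3 the atom is misplaced: since $z/(z+r)\big|_{r=0}=1$, an atom $\sigma(\{0\})$ contributes a constant to $f$, hence a term $\sigma(\{0\})/z^{2}$ (not $\sigma(\{0\})/z$) to $f(z)/z^{2}$; the corrected $g$ is $g(t)=b+(a+\sigma(\{0\}))\,t+\int_{]0,\infty[}(1-\e^{-rt})\,r^{-1}\sigma(\dd r)$. In fact, the finiteness of $f(0+)$ already excludes a Herglotz atom at $t=0$, so $\sigma(\{0\})=0$ and the issue is moot. Second, the ``main obstacle'' you flag --- upgrading $\int(1+r^{2})^{-1}\nu(\dd r)<\infty$ to $\int(1+r)^{-1}\sigma(\dd r)<\infty$ --- is resolved precisely by condition~(ii): writing $\sigma(\dd r)=\nu(\dd r)/r$ on $]0,\infty[$, finiteness of $f(0+)$ forces $\int_{]0,\infty[}\nu(\dd r)/(r(1+r^{2}))<\infty$, which is equivalent to the required bound. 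It would strengthen the write-up to say this explicitly rather than leave it as an obstacle.
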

Theorem~\ref{thm:J} has the following corollaries: 
(1) If $f$ is a CBF then $f^\alpha$ is a CBF if $0 \leq \alpha \leq 1$
\cite{HanSer09}. \\
(2) If $f \not\equiv 0$ then the function $x/f(x)$ is a CBF. 

Every complete Bernstein function $f$ has the integral representation:
\begin{equation}
f(x) = a + b\, x + x \int_{]0,\infty[\;} \frac{\nu(\dd r)}{x + r}
\end{equation}
where $a, b \geq 0$ and $\nu$ is a positive measure satisfying the inequality
\begin{equation} \label{eq:5}
\int_{[0,\infty[\;} \frac{\nu(\dd r)}{1 + r} < \infty
\end{equation}
\cite{Jacob01I}.

If $G$ is a completely monotone function integrable over $[0,1]$ and 
satisfying eq.~\eqref{eq:2} then 
$$\tilde{G}(p) = \int_{[0,\infty[\;} \frac{\mu(\dd r)}{p + r} =
\frac{\mu(\{0\})}{p} + \int_{]0,\infty[\;} \frac{\mu(\dd r)}{p + r}$$
It follows that
\begin{equation} \label{eq:8} 
\mathcal{Q}(p) := p \, \tilde{G}(p) = \mu(\{0\}) + \int_{]0,\infty[\;} 
\frac{\mu(\dd r)}{p + r}
\end{equation}
is a complete Bernstein function \cite{HanSer09}.

\section{Dispersion and attenuation.}

The Green's function $\mathcal{G}(t,x)$ is defined as the solution of 
the problem
\begin{equation}
\rho u_{,tt} = G(t)\ast u_{,txx} + \delta(x)\, \delta(t)\\
\end{equation}
with zero initial data.
In a three-dimensional space 
\begin{multline}  \label{eq:3Dgreen}
\mathcal{G}(t,x) = \frac{-1}{(2 \upi)^3\, r} \int_{-\ii \infty + \varepsilon}^{\ii \infty + \varepsilon} \e^{p t} \, \frac{1}{\mathcal{Q}(p)}\,\dd p \int_{-\infty}^\infty \, 
\frac{\e^{\ii k r}}{k^2 + B(p)^2}\, k \, \dd k = \\ 
\frac{1}{8 \upi^2 \,\ii\, r} \int_{-\ii \infty + \varepsilon}^
{\ii \infty + \varepsilon} \,\frac{1}{\mathcal{Q}(p)}
\e^{p\, t - B(p) \, r}\, \dd p 
\end{multline}
where
\begin{equation} \label{eq:6} 
B(p) := p\,\frac{\rho^{1/2}}{\mathcal{Q}(p)^{1/2}} 
\end{equation}
If $G$ is completely monotone then $\mathcal{Q}$ is a complete Bernstein 
function.
If a function $f$ is a complete Bernstein function then the functions 
$f^{1/2}$ and $p/f(p)$ are complete Bernstein functions. Hence $B$ is a 
complete Bernstein function and
$B(p) = a + c\, p + b(p)$ where $a, c \geq 0$ and the dispersion-attenuation function 
$b$ has the integral representation
\begin{equation} \label{eq:attenuationspectrum}
b(p) := p\, \int_{]0,\infty[\;} \frac{\nu(\dd r)}{p + r}
\end{equation}
and $\nu$ is a positive measure satisfying \eqref{eq:5}. 
A function $b$ having an integral representation \eqref{eq:attenuationspectrum}
will be called an \emph{admissible dispersion-attenuation function}.
The measure $\nu$ is the spectral measure of the dispersion-attenuation function $b$. 
A dispersion-attenuation function $b$ of a viscoelastic material with a positive 
relaxation spectrum is admissible. Conversely, for any admissible 
dispersion-attenuation function there is a completely monotone relaxation
modulus $G$ satisfying eq.~\eqref{eq:6} and eq.~\eqref{eq:8}. 

Using this equivalence it is possible to decouple the study of admissible 
dispersion-attenuation functions
from considering specific constitutive equations. Furthermore, in order to 
avoid incompatibility with viscoelastic theory experimental 
studies should target the spectral measure $\nu$ of $b$ rather than the 
dispersion-attenuation function 
$b(p)$. Any positive measure $\nu$ satisfying \eqref{eq:5} is compatible with
the theory and the implications of the theory for the dispersion-attenuation function
$b$ are encapsulated in \eqref{eq:attenuationspectrum}. Using the spectral measure
$\nu$ instead of $b$ to match the experimental data 
is analogous to expressing relaxation modulus in terms of the 
relaxation spectrum, e.g. by applying Prony sums to represent experimental 
data. A discretization of the dispersion-attenuation measure 
is presented in Sec.~\ref{ssec:discrete}.

Let $G_\infty = \lim_{t \rightarrow \infty} G(t)$. Since $\lim_{p\rightarrow 0} 
\left[ p \,\tilde{G}(p) \right] = G_\infty$, 
the inequality $G_\infty > 0$ and eq.~\eqref{eq:6} imply that 
$B(0) = 0$ and therefore $a = 0$. The limit 
$\lim_{p\rightarrow\infty} B(p)/p = c$ implies that $c = 1/c_0$, where $c_0$ 
denotes the wave front speed.  The last conclusion follows from the
fact that for $p > 1$ 
$$\frac{1}{p + r} \leq \frac{1}{1 + r}$$
and the right-hand side is integrable with respect to the measure $\nu$ in view of  
eq.~\eqref{eq:5}. For $p \rightarrow \infty$ the integrand of 
$$\int_{]0,\infty[\;} \frac{\nu(\dd r)}{p + r}$$
tends to zero, hence, by the Lebesgue Dominated Convergence Theorem, 
the integral tends to zero as well. It follows additionally that 
$b(p) = \mathrm{o}[p]$ for $p \rightarrow \infty$. 
Hence \eqref{eq:3Dgreen} can be recast in a more explicit form 
\begin{equation} \label{eq:Green3D}
\mathcal{G}(t,x) = \frac{1}{8 \upi^2 \,\ii\, r} \int_{-\ii \infty + \varepsilon}^
{\ii \infty + \varepsilon} \,\frac{1}{\mathcal{Q}(p)}
\e^{p\,( t - \vert x \vert/c_0) - b(p) \, r}\, \dd p 
\end{equation}
and the dispersion-attenuation function $b(p)$ has sublinear growth.

We also note that the attenuation function is non-negative: 
\begin{equation}
\mathcal{A}(p) := \re b(p) = \int_{]0,\infty[\;} 
\frac{\vert p \vert^2 + r \, \re p}
{\vert p + r\vert^2} \nu(\dd r) \geq 0 
\end{equation}
for $\re p \geq 0$. 

The derivative 
\begin{equation}
b^\prime(p) = b(p)/p - \int_{]0,\infty[\;} 
\frac{\nu(\dd r)}{(p + r)^2} \leq b(p)/p
\end{equation}
Moreover 
$$b^\prime(p) = \int_{]0,\infty[\;} \frac{r\, \nu(\dd r)}{(p + r)^2} 
\geq 0$$
Hence $b^\prime(p) = \mathrm{o}[1]$ for $p \rightarrow \infty$.

The attenuation function $\mathcal{A}$ and the dispersion function 
$\mathcal{D}(p) :=
-\im b(p)$ satisfy linear dispersion equations in parametric form:
\begin{gather}
\mathcal{A}(p) = \int_{]0,\infty[\;} 
\frac{\vert p \vert^2 + r \, \re p}
{\vert p + r \vert^2} \nu(\dd r)\\
\mathcal{D}(p) = -\im p \int_{]0,\infty[\;} 
\frac{r}{\vert p + r\vert^2} \nu(\dd r)
\end{gather}

The measure $\nu$ represents the dispersion-attenuation spectrum. An elementary 
dispersion-attenuation is represented by the function $p/(p + r)$ for a fixed value of $r$.

\section{Implications of finite speed of wave propagation.}

Since $1/\mathcal{Q}(p) = p \, \tilde{J}(p) = J(0) + \widetilde{J^\prime}(p)$,
where $J$ is the creep compliance, we can express the Green's function \eqref{eq:Green3D}
in the form of a convolution
$$\mathcal{G}(t,x) = J^\prime(t)\ast H(t-\vert x \vert/c_0,\vert x\vert)
+ J_0 \, H(t-\vert x \vert/c_0,\vert x\vert)$$
where $H(\tau,r)$ is the inverse Laplace transform of the function
$\e^{-b(p)\,r}/(4 \upi r)$. In terms of the inverse Fourier 
transformation
$$H(\tau,r) = \frac{1}{8 \upi^2 \, r} \int_{-\infty}^\infty
\e^{-\ii \omega \tau - b(-\ii \omega)\, r} \, \dd \omega$$
Note that $b(-\ii \omega) = \overline{b(\ii \omega)}$.
The integrand is square integrable if $r > 0$ and 
\begin{equation} \label{eq:condPW}
\int_0^\infty \e^{-2 \re b(-\ii \omega)\, r} \dd \omega < \infty
\end{equation}

If eq.~\eqref{eq:condPW} holds then 
the Paley-Wiener theorem (Theorem~XII in \cite{PaleyWiener}) can be applied.
By this theorem  
$H(\tau,r) = 0$ for $\tau < 0$
if and only if 
\begin{equation} \label{eq:PW}
\int_{-\infty}^\infty \frac{\re b(-\ii \omega)}{1 + \omega^2} \dd \omega < \infty
\end{equation}
$J^\prime$ is a causal function. Hence, 
if $H(\tau,r) = 0$ for $\tau < 0$ then 
$$\mathcal{G}(t,x) = \int_0^{t-\vert x \vert/c_0} 
J^\prime(s) \, H(t - \vert x \vert/c_0 - s, \vert x \vert) \, \dd s $$
vanishes for $t < \vert x \vert/c_0$.

\begin{figure}
\includegraphics[width=0.75\textwidth]{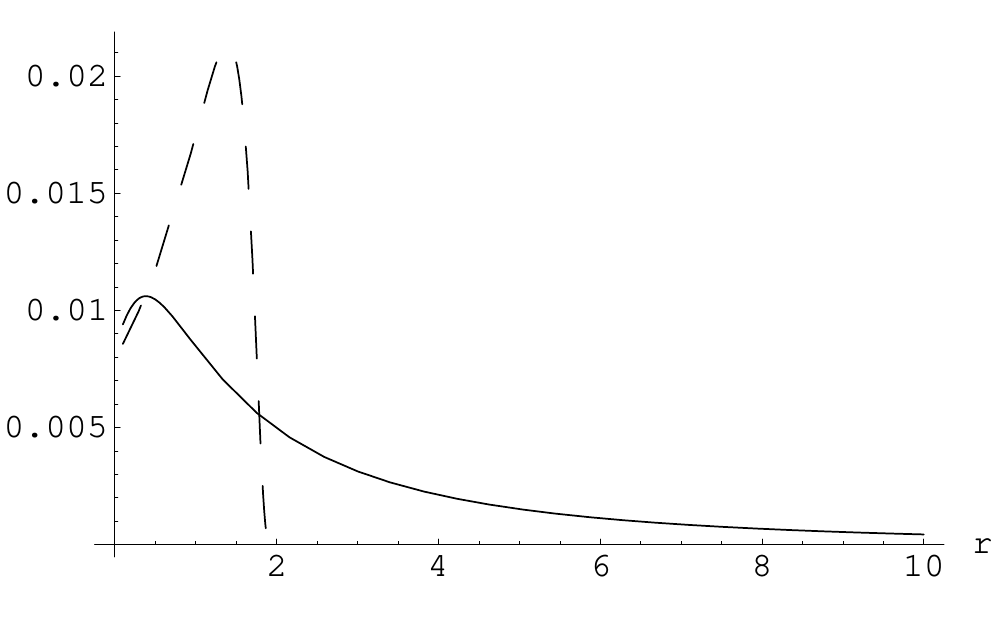}
\caption{Two snapshots of Green's functions for power-law attenuation with 
$\alpha=3/2$ (solid line) and $\alpha=1/2$ (dashed line).} 
\label{fig:1}
\end{figure}

In particular, if $b(p) \sim_\infty a\, p^\alpha$, $a > 0$, $\re p \geq 0$,
eqs~\eqref{eq:condPW} and \eqref{eq:PW} are ensured by the inequality 
$0< \alpha < 1$. A precursor appears for $\alpha > 1$, as can be seen from 
Fig.~\ref{fig:1}. For $\alpha = 3/2$\footnote{For $\alpha=1/3, 2/3$ the
$\alpha$-stable probability can be expressed in terms of Airy functions, see 
\cite{HanQAM}.} there is no wavefront and the 
peak is preceded by a precursor extending to infinity. The limit case, as we already know,
is the asymptotic behavior $b(p) \sim_\infty a\,p/\ln(p)$ and it entails
unbounded propagation speed.

For a general viscoelastic medium with a positive relaxation spectrum we note
that
\begin{equation}
\re b(-\ii \omega) = \omega \im \int_{]0,\infty[\;} \frac{\nu(\dd \xi)}{\xi - \ii \omega} 
= \omega^2 \int_{]0,\infty[\;} \frac{\nu(\dd \xi)}{\xi^2 + \omega^2}
\end{equation} 
If the total mass $M$ of  $\nu$ is finite then 
\begin{equation}
\lim_{\omega\rightarrow\infty}\;\re b(-\ii \omega) = M
\end{equation}
by the Lebesgue Dominated Convergence Theorem.
In the general case the inequality
$$\frac{\omega}{\xi^2 + \omega^2} \leq \frac{2 \omega}{\xi^2 + 2 \xi \omega + \omega^2} 
\leq \frac{2 \omega}{(1 + \xi) \, (\xi + \omega)}$$
valid for $\omega \geq 1$ and the Lebesgue Dominated Convergence Theorem imply that
$\re b(-\ii \omega) = \oo[\omega]$ at $\omega \rightarrow \infty$. 

We have thus proved that the asymptotic growth of 
$\re b(-\ii \omega)$ in the high-frequency range is sublinear.
This is however insufficient to ensure convergence of the left-hand side of \eqref{eq:PW}
and vanishing of the wave field ahead of the wavefront $\vert x \vert = c_0\, t$. 
For example for $b(-\ii \omega) \sim_\infty  \omega/\ln^\alpha(\omega)$ 
the integral in \eqref{eq:PW} does not always converge. Since we are interested in
the high-frequency behavior, we can replace $1 + \omega^2$ by $\omega^2$ in the
denominator of \eqref{eq:PW}. 
For $\alpha = 1$ 
$$\int_\e^\omega \frac{\dd y}{y\, \ln(y)} = \ln(\ln(\omega))$$
is unbounded for $\omega \rightarrow \infty$, hence \eqref{eq:PW} is not satisfied. 
For $\alpha \neq 1$
$$\int_\e^\omega \frac{\dd y}{y\, \ln^\alpha(y)} = 
\frac{\ln^{1-\alpha}(\omega)-1}{1 - \alpha}$$
is unbounded for $\alpha < 1$ and bounded for $\alpha > 1$. 
We thus see that in contrast to the power law attenuation, linear growth is not a limit
case for finite propagation speed. Some sublinear cases will also exhibit precursors 
ahead of the wave front. 

In particular the function $b(p) = p/\ln^\alpha(1 + p)$ with $0 \leq \alpha \leq 1$ is a 
CBF with the 
asymptotic properties discussed in the previous paragraph. Indeed,
for $\im p \geq 0$ the argument $\psi = \arg(1 + p)$ satisfies the inequality
$0 \leq \psi \leq \upi$. Hence $\ln(1 + p)$ maps the upper half plane into itself and is
non-negative for $p \geq 0$. Hence, by Theorem~\ref{thm:J}, $\ln(1 + p)$ is a CBF.
The same is true for $\ln^\alpha(1 + p)$ if $0 < \alpha \leq 1$.  In view of a property of CBF functions mentioned after Theorem~\ref{thm:J} this implies
that $b(p)$ is a CBF if $0 < \alpha \leq 1$. Moreover $b(0) = 0$ and
$\lim_{p\rightarrow \infty} b(p)/p = 0$, hence $b(p)$ is an admissible 
dispersion-attenuation function. 
We thus have produced an example of an admissible dispersion-attenuation function $b(p)$ such that 
$b(-\ii \omega)$ has sublinear growth in the
high frequency range but \eqref{eq:PW} is not satisfied.

\section{Examples.}
\subsection{Discrete dispersion-attenuation spectra.}
\label{ssec:discrete}

If the measure $\nu = \sum_{n=1}^N c_n\, \varepsilon_{r_n}$, $r_n, c_n > 0$,
where $\varepsilon_a = \delta(r - a)$ denotes the Dirac measure concentrated at the point $a$,
then
\begin{equation} \label{eq:7}
b(p) = p \sum_{n=1}^N c_n/(p + r_n)
\end{equation} 

Eq.~\eqref{eq:7} can be used to construct numerical approximations of experimental data.
Using the relation $\mathcal{Q}(p)\, B(p)^2 = \rho\, p^2$ and the assumption 
$\mu = \sum_{m=1}^M d_m\, \varepsilon_{s_m}$ with $d_m > 0$, $s_m \geq 0$ for $1 \leq m \leq M$,
$s_m \neq r_n$ for all $n, m$, it is possible to express the relaxation data $d_m, s_m$
in terms of the dispersion-attenuation data $c_n, r_n$ and conversely.

\subsection{Power-law attenuation.}

Power-law attenuation is commonly used to match experimental dispersion and attenuation 
data for a wide variety of real viscoelastic materials such as polymers, bio-tissues and some 
viscoelastic fluids. 

Consider the viscoelastic medium defined by the following equation of motion
\begin{equation} \label{eq:Rok}
\rho\, \left(\D_\mathrm{C}^2 + 2 a\, \D_\mathrm{C}^{1+\alpha} + 
a^2\, \D_\mathrm{C}^{2 \alpha} \right) u = A\, \nabla^2 \, u + \delta(t)\, \delta(x)
\end{equation}
with the initial data $u(0,x) = u_{,t}(0,x) = 0$ in $d$ dimensions, $d =1,3$.
$\D_\mathrm{C}^\alpha$ denotes the Caputo fractional derivative of order 
$\alpha$ 
\begin{equation}
\D_\mathrm{C}^\alpha f(x) := \int_0^t \frac{(t-s)^{-\alpha}}{\Gamma(1-\alpha)} f^\prime(s) \, \dd s
\end{equation}
\cite{PodlubnyBook}.

It is assumed that $A > 0$, $a \geq 0$, $0 < \alpha < 1$.
The Laplace transformation $\mathcal{L}_t$ with respect to the time variable and the 
Fourier transformation $\mathcal{F}_{x}$
with respect to the spatial variable bring eq.~\eqref{eq:Rok} to the following form
\begin{equation} \label{eq:Rok1}
\rho\,g(p)^2 \, \hat{u}(p,\kk) 
= -A\,\kk^2\,\hat{u}(p,\kk) + 1 
\end{equation}
where $\hat{u} := \mathcal{F}_{x}\left(\mathcal{L}_t(u)\right)$ and 
$g(p) := p + a\, p^\alpha$. Eq.~\eqref{eq:Rok1} implies that in the power law attenuation model
$\mathcal{Q}(p) = p\,A/g(p)^2$ and $B(p) = g(p)/c_0$.

We shall begin with solving eq.~\eqref{eq:Rok} in one dimension. Applying the inverse
Fourier transformation to eq.~\eqref{eq:Rok1}:
$$\tilde{u}^{(1)}(p,x) = \frac{1}{2 \upi} \int_{-\infty}^\infty 
\frac{\e^{\ii k x + p t}}{\rho \,g(p)^2 + A\, k^2} \dd k$$
The contour can be closed by a large half-circle in upper-half complex $k$-plane if 
$x > 0$, in the lower- half complex $k$-plane if $x < 0$ and 
$$\frac{1}{A\, k^2 + \rho\,g(p)^2} = \frac{1}{2 \ii g(p)\, A} 
\left[\frac{1}{k - \ii g(p)/c_0} - \frac{1}{k + \ii g(p)/c_0}  \right]$$
where $c_0 := \sqrt{A/\rho}$. We now restrict ourselves to imaginary values of 
$p = -\ii w$. Since $\im [\ii\, g(-\ii w)] = \re g(-\ii w) = a\,\vert w \vert^\alpha \,
\cos((1-\alpha)\,\upi/2) \geq 0$, the residuum at $\pm \ii g(p)/c_0$ contributes if 
$\pm x > 0$. Hence
$$\tilde{u}(p,x) = \frac{1}{2 A} 
\frac{\e^{-g(p) \vert x \vert/c_0 + p t}}{g(p)} $$
and
\begin{equation}
u^{(1)}(t,x) = \frac{1}{4 \upi \ii \, A} \int_{-\ii \infty}^{\ii \infty} \frac{1}{g(p)}
\e^{p \,t - g(p)\vert x \vert/c_0} \dd p
\end{equation}
The solution $u^{(3)}$ of \eqref{eq:Rok} in a three-dimensional space 
is given by the formula
$$u^{(3)}(t,x) = \frac{-1}{2 \upi r} \frac{\partial}{\partial r} 
u^{(1)}(t,r)$$
Hence 
\begin{multline}
u^{(3)}(t,x) = \frac{1}{8 \upi^2 \ii \vert x \vert\, c_0\, A}  
\int_{-\ii \infty}^{\ii \infty} \e^{p \,(t - \vert x \vert/c_0) - 
a \, p^\alpha\, \vert x \vert} \dd p = \\
\frac{1}{4 \upi a^{1/\alpha}\,\vert x \vert^{1+1/\alpha}\, c_0 \, A}\, 
P_\alpha\left((t - \vert x \vert/c_0)/(a \vert x \vert)^\alpha\right)
\end{multline}
where
\begin{equation}
P_\alpha(z) := \frac{1}{2 \upi \ii} \int_{-\ii \infty}^{\ii \infty} 
\e^{y z} \, \e^{-y^\alpha}\, \dd y
\end{equation}
The function $P_\alpha$ is a totally skewed L\'{e}vy stable probability density 
\cite{Sato:Levy,Zolotarev2}. 

In the case of power-law attenuation $\re b(-\ii \omega) =
a\, \omega^\alpha \, \cos(\upi \alpha/2) > 0$ and for $p = -\ii \omega$ 
$$\left\vert \e^{p\, t - g(p)\, r/c_0} \right\vert = 
\e^{-a\,\omega^\alpha\, \cos(\alpha\,\upi/2)}$$
Hence (\cite{WhittakerWatson}, Sec.~4.44) the integrals
$$f(t,r) := \frac{1}{2 \upi \ii} \int_\mathfrak{B} (-1)^n \,p^{-2}\,p^m\, g(p)^{n+1} \, 
\e^{p\,t - g(p) \, r/c_0} \, \dd p$$
are uniformly convergent for $r > 0$ and the derivatives
$\partial^{n+m}\, f/\partial t^n\, \partial r^m$ exist for all positive integers $n, m$
and $r > 0$. 

All the properties of the dispersion-attenuation function $b$ were derived from the fact that 
$B(p) = p\,\rho^{1/2}/\mathcal{Q}(p)^{1/2}$ is a complete
Bernstein function. It follows from the above property of $B$ that $\mathcal{Q}(p)^{1/2}$ is a 
complete Bernstein function  \cite{HanSer09}. 
It does not follow from here that $\mathcal{Q}$ is a also complete Bernstein function.
Therefore $G$ need not be a completely monotone function. A counterexample is provided 
by the power law attenuation model.
If $0 \leq \alpha < 1/2$ then the relaxation modulus $G$ is not completely 
monotone but the attenuation function satisfies
eq.~\eqref{eq:attenuationspectrum}.  The latter condition is satisfied for the 
power law attenuation if and only if $0 \leq \alpha < 1/2$ (Fig.~\ref{fig:0}). 

\begin{figure}
\begin{center}
\includegraphics[width=0.75\linewidth]{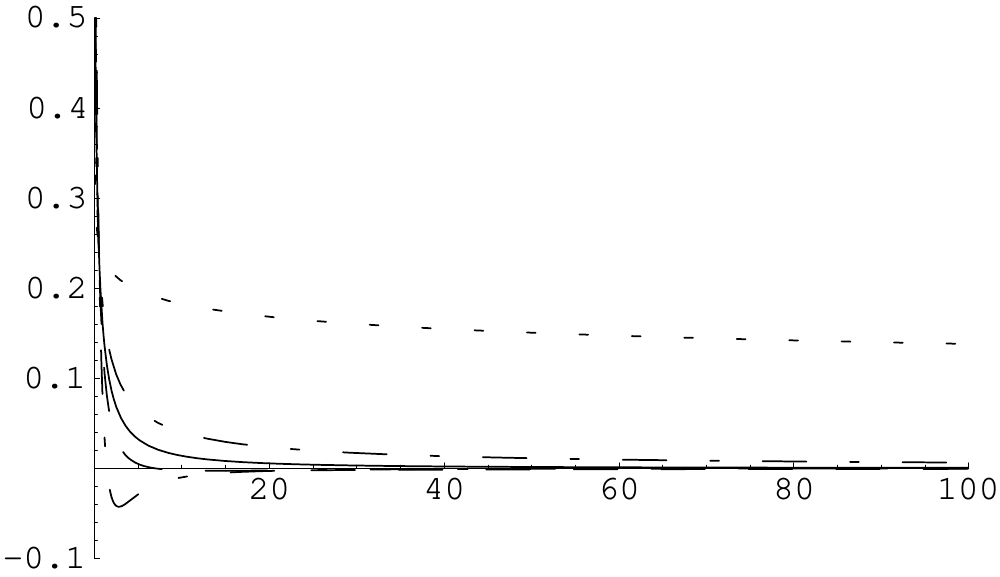}
\end{center}
\caption{The relaxation modulus $G$ for the power-law attenuation. 
The exponent values are $\alpha = 0.3, 0.4, 0.5, 0.6, 0.9$,
from bottom to top.}
\label{fig:0}
\end{figure}
\begin{theorem}
The spectral measure of the dispersion-attenuation function $a\, p^\alpha$, 
$0 \leq \alpha < 1$, is 
\begin{equation} \label{eq:attspepower}
\nu(\dd \xi) = \frac{\sin(\upi \alpha)}{\upi} \xi^{\alpha-1} \, \dd \xi
\end{equation}
\end{theorem}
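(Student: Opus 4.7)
The plan is to verify the representation \eqref{eq:attenuationspectrum} directly by evaluating the Stieltjes integral against the proposed density, and then to invoke uniqueness of the spectral measure of a complete Bernstein function.

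First, I would recall that the integral representation of a CBF displayed just before \eqref{eq:5} determines the triple $(a,b,\nu)$ uniquely; this is a standard consequence of the Stieltjes inversion formula for Pick functions. Consequently it suffices to exhibit any positive measure $\nu$ satisfying \eqref{eq:5} for which \eqref{eq:attenuationspectrum} reproduces $p^\alpha$; the claim for $a\,p^\alpha$ then follows by linearity in the constant $a$. The proposed density $\xi^{\alpha-1}$ is integrable against $(1+\xi)^{-1}$ on $(0,\infty)$ for every $0<\alpha<1$ (convergence at $0$ because $\alpha>0$, at infinity because $\alpha<1$), so condition \eqref{eq:5} is verified.

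Next, I would check the identity on the positive real axis. For $p>0$ the substitution $u=\xi/p$ yields
\[
p\int_{]0,\infty[}\frac{\sin(\upi\alpha)}{\upi}\,\frac{\xi^{\alpha-1}}{p+\xi}\,\dd\xi
=p^\alpha\,\frac{\sin(\upi\alpha)}{\upi}\int_0^\infty\frac{u^{\alpha-1}}{1+u}\,\dd u,
\]
and the remaining integral equals $B(\alpha,1-\alpha)=\Gamma(\alpha)\Gamma(1-\alpha)=\upi/\sin(\upi\alpha)$ by Euler's reflection formula, so the product collapses to $p^\alpha$. Both sides of this identity are holomorphic in $p$ on the cut plane $\mathbb{C}\setminus(-\infty,0]$ — the left-hand side by dominated convergence applied to differentiation under the integral sign, the right-hand side trivially — so the identity extends from the positive real axis to the entire natural domain of the CBF, at which point uniqueness of the spectral measure closes the argument.

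The endpoint $\alpha=0$ is consistent with the statement: the density $\sin(\upi\alpha)\,\xi^{\alpha-1}/\upi$ degenerates to the zero measure and the leftover constant $p^0=1$ is absorbed into the $a$-term of the Nevanlinna decomposition, outside the Stieltjes integral. There is no serious obstacle in this argument; the core step reduces to Euler's reflection formula for the Gamma function, and the only point demanding mild care is the analytic continuation from the positive real axis to the cut plane, which is routine for parameter-dependent Stieltjes integrals of this type.
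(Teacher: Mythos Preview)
Your argument is correct and follows essentially the same strategy as the paper: verify directly that the proposed density reproduces $p^\alpha$ through the Stieltjes representation \eqref{eq:attenuationspectrum}. The only technical difference lies in how the key integral $\int_0^\infty \xi^{\alpha-1}/(p+\xi)\,\dd\xi$ is evaluated: the paper writes $1/(p+\xi)=\int_0^\infty \e^{-y(p+\xi)}\,\dd y$ and composes two Laplace transforms of power functions, arriving at the same constant $1/\big(\Gamma(\alpha)\Gamma(1-\alpha)\big)=\sin(\upi\alpha)/\upi$, whereas you scale out $p$ and recognize the Beta integral directly. Your version is in fact a bit more complete, since you explicitly verify condition \eqref{eq:5}, invoke uniqueness of the spectral measure, and handle the analytic continuation and the degenerate endpoint $\alpha=0$; the paper's proof is terse and leaves these points implicit.
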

\begin{proof}
The Laplace transforms
\begin{gather}
x^{\alpha-1} = \int_0^\infty \e^{-x y} \, \left[y^{-\alpha}/\Gamma(1-\alpha)\right]\, \dd y\\
y^{-\alpha} =  \int_0^\infty \e^{-y z} \, \left[z^{\alpha-1}/\Gamma(\alpha)\right]\, \dd z
\end{gather}
hence
$$x^{\alpha - 1} = \frac{\sin(\alpha \upi)}{\upi} \int_0^\infty \frac{z^{\alpha-1}}{x + x} 
\, \dd z$$
which implies eq.~\eqref{eq:attspepower}.
\end{proof}
 
\section{Superlinear power law attenuation.}

A large number of papers have been devoted to the implications of the
Kramers-Kronig dispersion relations for the wave number $K(\omega)$. The 
Kramers-Kronig dispersion relations would follow from the assumption 
that the function $K(\omega) - \omega/c_0$ is the Fourier transform of a causal 
function or 
causal tempered distribution $L(t)$. A priori the function $L(t)$ has no 
physical meaning and the assumption of causality of $L$ is unwarranted.  
Causality of $L$ would however be justified by the assumption
that 
$$B(p) = p/c_0 + \tilde{L}(p) $$
and the equation of motion has the following form 
\begin{equation} \label{eq:motion}
c_0^{\;-2}\, u_{,tt} + (2/c_0)\, L\ast u_{,t} + L\ast L\ast u =
u_{,xx}
\end{equation}
In this case $K(\omega) = \ii B(-\ii \omega) = 
\omega/c_0 + \ii \, \tilde{L}(-\ii \omega) = 
\ii \omega/c_0 + \hat{L}(\omega)$.  Eq.~\eqref{eq:motion} is however
incompatible with the viscoelastic constitutive equation. In a viscoelastic 
equation of motion integral operators should act on the Laplacian of $u$.

In \cite{SzaboWu00} the authors try to guess the viscoelastic constitutive
equation 
compatible with \eqref{eq:motion}. Their approach involves an approximation
of a spatial derivative by a temporal derivative. It is however possible 
to avoid
an approximation by shifting the dispersive terms on the left-hand side 
of \eqref{eq:motion} to the right-hand side. The Laplace transform of the 
left-hand side of eq.~\eqref{eq:motion} is
$$ p^2\, \left[ 1 + c_0\, \tilde{L}(p)/p \right]^2/c_0^{\;2}$$
assuming that $u(0,x) = u_{,t}(0,x) = 0$. 
Hence \eqref{eq:motion} has the form 
$$c_0^{\;-2}\,u_{,tt} = \left[G(t)\ast u_{,tx}\right]_{,x}$$
where $G$ is the inverse Laplace transform of
\begin{equation} \label{eq:xxx}
p^{-1}\,\left[ 1 + c_0\, \tilde{L}(p)/p \right]^{-2} 
\end{equation}
Expression~\eqref{eq:xxx} is the Laplace transform of a completely monotone
function if and only if $\left[1 + c_0\, \tilde{L}(p)/p\right]^2/p$ is the 
Laplace transform of a Bernstein function $f$ \cite{HanDuality,HanSerPRSA}. 

For $\tilde{L}(p) = a\, p^\alpha$, with $a, \alpha > 0$, the function $f$ 
assumes the form
$\theta(t) + 2 c_0 \,a\, t_+^{1-\alpha}/\Gamma(1-\alpha) + c_0^2\,a^2\, 
t_+^{2 ( 1-\alpha)}/\Gamma(3-2\alpha)$.
It is obvious that $f$ is a Bernstein function (and $G$ is completely monotone)
if and only if $1/2 \leq \alpha < 1$. We also note that 
$L(t) = a\, t_+^{-\alpha-1}/\Gamma(-\alpha)$ is a distribution.
Convolution with 
$L(t)$  is a Riemann-Liouville fractional differential operator of order 
$\alpha$. 
The order of the fractional differential equation \eqref{eq:motion} 
is $2$ if $\alpha \leq 1$ and $2 \alpha $ if $\alpha > 1$. (In the literature
the highest-order derivative $L\ast L\ast u$ is often incorrectly neglected).
For $\alpha > 1$ the order of the time-differential operator exceeds
the order of the spatial differential operator, hence the equation is 
formally parabolic. 

Concerning superlinear power law attenuation, note
that the logarithmic attenuation rate
\begin{equation}
\mathcal{A}(\omega) := \re B(-\ii \omega) = a \, \cos(\alpha\, \upi/2) \,
\vert \omega \vert^\alpha
\end{equation} 
is assumed non-negative, hence $1 < \alpha < 3$ must entail that $a < 0$. 
As the frequency tends to infinity the frequency-dependent phase  
speed $c(\omega)$, given by the equation  
$1/c(\omega) := \re [\ii B(- \ii \omega)] = 1/c_0 + 
a \, \vert \omega \vert^{\alpha-1} \, \sin(\alpha\,\upi/2)$
increases to its maximum value $c_0$ if $0 \leq \alpha < 1$ 
("abnormal dispersion"). 
For $1 \leq \alpha < 2$ it increases from $c_0$ at zero frequency to infinity
at a finite frequency 
$\omega_1 = 
\left[ c_0 \, \vert a \vert \, \sin(\alpha\, \upi/2) \right]^{1/(\alpha-1)}$ 
and changes sign. This behavior is clearly unphysical. 
For $2 < \alpha < 3$ phase speed decreases from $c_0$ at $\omega = 0$
to 0 at infinite frequency ("normal dispersion").

\section{Models with variable attenuation exponent.}

The limits on the dissipation-attenuation exponent $\alpha$ actually apply to 
the asymptotic value of $b(p)$ at infinity. Is it possible that 
the experimentally measured power law behavior applies to the middle frequency 
range?

Define the variable dissipation-attenuation exponent as the function
\begin{equation}
\alpha(p) = \ln(b(p))/\ln(p), \qquad p > 1
\end{equation}
so that $b(p) = p^{\alpha(p)}$. This definition has a major flaw:
a singularity at $p = 1$. The exponent decreases to $-\infty$ at 
$p \rightarrow 1-$ and restarts from $\infty$ to decrease towards its 
asymptotic values.

The simplest examples of dispersion-attenuation functions
with variable exponent are \begin{equation} 
b_1(p) = c\, p^\alpha/(a + p^\alpha),\qquad 0 \leq \alpha < 1
\end{equation}
and 
\begin{equation} 
b_2(p) = c\, (1 + \tau \,p)^{\alpha-\beta}\,(\tau\, p)^\beta, \qquad 0 < \beta \leq \alpha < 1 
\end{equation}
$b_1$ is a complete Bernstein function because $p/b_1(p) \equiv p/c + a\, p^{1-\alpha}/c$
is obviously a complete Bernstein function. Moreover $b_1(0) = 0$ and 
$\lim_{p\rightarrow\infty} b_1(p)/p = 0$. Hence $b_1$ is an admissible dispersion-attenuation
function. 

In order to prove that $b_2$ is a complete Bernstein function we need 
Theorem~\ref{thm:J}.
We now note that $\arg b_2(p) = \arg(1 + p)^{\alpha-\beta} + \arg p^\beta \leq
(\alpha-\beta)\, \arg p + \beta \, \arg p = \alpha \, \arg p$. For $p$ in the upper half-plane
$0 \leq \arg p \leq \upi$, hence $0 \leq \arg b_2(p) \leq \upi$ and  
$b_2$ is a complete Bernstein function. Since $b_2(0) = 0$ and 
$\lim_{p\rightarrow\infty} b_1(p)/p = 0$,  $b_2$ is admissible as a dispersion-relaxation
function. In the first case, $b = b_1$, the attenuation exponent $\alpha(p) \leq 0$
and $\alpha(p) \rightarrow 0$ from below as $p \rightarrow \infty$. 
In the second case $b = b_2$ and $\alpha(p)$ increases from $\alpha(0) = \beta$ 
to $\alpha(\infty) = \alpha$. It is thus likely that the exponent assumes
a larger value in the high frequency range. Thus a value of the exponent 
above 1 in the middle frequency range is not likely.

\section{Conclusions.}

The class of admissible dispersion and attenuation functions can be 
characterized by a class of Radon measures. The theory applies only
to sublinear attenuation growth in the high frequency range. 

Superlinear growth of the attenuation function in the high frequency range is
incompatible with the assumption of positive relaxation spectrum underlying
theoretical and experimental viscoelasticity. Superlinear growth of
attenuation also implies that the phase speed is unbounded for
high frequencies and the main signal is preceded by a precursor of 
infinite extent. 

Explanation of superlinear frequency dependence of attenuation observed 
in many real materials remains a challenging task.


\begin{thebibliography}{10}

\bibitem{BagleyTorvik4}
R.~L. Bagley and P.~J. Torvik.
\newblock On the fractional calculus model of viscoelastic behavior.
\newblock {\em J. of Rheology}, 30:133--155, 1986.

\bibitem{BergForst}
C.~Berg and G.~Forst.
\newblock {\em Potential Theory on Locally Compact Abelian Groups}.
\newblock Springer-Verlag, Berlin, 1975.

\bibitem{Bland:VE}
D.~R. Bland.
\newblock {\em The Theory of Linear Viscoelasticity}.
\newblock Pergamon Press, Oxford, 1960.

\bibitem{CariniAl84}
G.~Carini, M.~Cutroni, M.~Federico, G.~Galli, and G.~Tripodo.
\newblock {\em Phys. Rev. B}, 30:7219--, 1984.

\bibitem{ChenHolm04}
W.~Chen and S.~Holm.
\newblock Fractional {L}aplacian time-space models for linear and nonlinear
  lossy media exhibiting arbitrary frequency power-law dependency.
\newblock {\em J. Acoust. Soc. Am.}, 114:2570--2574, 2003.

\bibitem{ChenHolm03}
W.~Chen and S.~Holm.
\newblock Modified {S}zabo's wave equation models for lossy media obeying
  frequency power law.
\newblock {\em J. Acoust. Soc. Am.}, 114:2570--2574, 2003.

\bibitem{CobboldSushilovWeathermon04}
R.~S.~C. Cobbold, N.~V. Sushilov, and A.~C. Weathermon.
\newblock Transient propagation in media with classical or power-law loss.
\newblock {\em J. Acoust. Soc. Am.}, 116:3294--3302, 2004.

\bibitem{Andrade1}
E.~N. da~Andrade.
\newblock On the viscous flow of metals and allied phenomena.
\newblock {\em Proc. Roy. Soc. London}, A84:1--12, 1910.

\bibitem{Andrade2}
E.~N. da~Andrade.
\newblock On the validity of the $t^{1/3}$ law of flow of metals.
\newblock {\em Phil. Mag.}, 7:(84), 1912.

\bibitem{Futterman}
W.~Futterman.
\newblock Dispersive body waves.
\newblock {\em J. Geophys. Res.}, 67:5279--5291, 1962.

\bibitem{GuessCampbell95}
J.~F. Guess and J.~S. Campbell.
\newblock Acoustic properties of some biocompatible polymers at body
  temperature.
\newblock {\em Ultrasound Med. Biol.}, 21:273–--277, 1995.

\bibitem{HanQAM}
A.~Hanyga and M.~Seredy\'{n}ska.
\newblock Asymptotic wavefront expansions in hereditary media with singular
  memory kernels.
\newblock {\em Quart. Appl. Math.}, LX:213--244, 2002.

\bibitem{HanDuality}
A.~Hanyga and M.~Seredy\'{n}ska.
\newblock Relations between relaxation modulus and creep compliance in
  anisotropic linear viscoelasticity.
\newblock {\em J. of Elasticity}, 88:41--61, 2007.

\bibitem{HanSer09}
A.~Hanyga and M.~Seredy\'{n}ska.
\newblock Positivity of viscoelastic {G}reen's functions.
\newblock {\em arXiv:0908.3078v1 [cond-mat]}, 2009.

\bibitem{Jacob01I}
N.~Jacob.
\newblock {\em Pseudo-{D}ifferential Operators and {M}arkov Processes},
  volume~I.
\newblock Imperial College Press, London, 2001.

\bibitem{KellyMcGoughMeerschaert08}
J.~F. Kelly, R.~J. McGough, and M.~M. Meerschaert.
\newblock Analytical time-domain {G}reen's functions for power-law media.
\newblock {\em J. Acoust. Soc. Am.}, 124:2861--2872, 2008.

\bibitem{MinsterAnderson}
J.~B. Minster and D.~L. Anderson.
\newblock A model of dislocation-controlled rheology for the mantle.
\newblock {\em Phil. Trans. R. Soc. London}, 299:319--356, 1981.
\newblock Andrade creep.

\bibitem{Molinari}
A.~Molinari.
\newblock Visco\'{e}lasticit\'{e} lin\'{e}aire and functions compl\`{e}tement
  monotones.
\newblock {\em J. de m\'{e}canique}, 12:541--553, 1975.

\bibitem{PaleyWiener}
R.~E.~A.~C. Paley and N.~Wiener.
\newblock {\em Fourier Transforms in the Complex Domain}.
\newblock AMS, New York, 1934.

\bibitem{PodlubnyBook}
I.~Podlubny.
\newblock {\em Fractional Differential Equations}.
\newblock Academic Press, San Diego, 1998.

\bibitem{RibodettiHanygaGJI}
A.~Ribodetti and A.~Hanyga.
\newblock Some effects of the memory kernel singularity on wave propagation and
  inversion in poroelastic media, {II}: Inversion.
\newblock {\em Geophys. J. Int.}, 158:426--442, 2004.

\bibitem{SasakiAl93}
N.~Sasaki, Y.~Nakayama, M.~Yoshikawa, and A.~Enyo.
\newblock Stress relaxation function of bone and bone collagen.
\newblock {\em J. of Biomechanics}, 26:1369--1376, 1993.

\bibitem{Sato:Levy}
Ken-Iti Sato.
\newblock {\em L\'{e}vy processes and Infinitely Divisible Distributions}.
\newblock Cambridge University Press, Cambridge, 1999.

\bibitem{HanSerPRSA}
M.~Seredy\'{n}ska and A.~Hanyga.
\newblock Cones of material response functions in 1{D} and anisotropic linear
  viscoelasticity.
\newblock {\em Proc. Roy. Soc. London A}, 465:3751--3770, 2009.

\bibitem{Szabo1}
T.~L. Szabo.
\newblock Time domain wave equations for lossy media obeying a frequency power
  law.
\newblock {\em J. Acoust. Soc. Am.}, 96:491--500, 1994.

\bibitem{Szabo2}
T.~L. Szabo.
\newblock Causal theories and data for acoustic attenuation obeying a frequency
  power law.
\newblock {\em J. Acoust. Soc. Am.}, 97:14--24, 1995.

\bibitem{SzaboWu00}
T.~L. Szabo and J.~Wu.
\newblock A model for longitudinal and shear wave propagation in viscoelastic
  media.
\newblock {\em J. Acoust. Soc. Am.}, 107:2437--2446, 2000.

\bibitem{WatersHughersBrandenburgerMiller}
K.~R. Waters, M.~S. Hughes, G.~H. Brandenburger, and J.~G. Miller.
\newblock On a time-domain representation of the {K}ramers{–}{K}ronig
  dispersion relations.
\newblock {\em J. Acoust. Soc. Am.}, 108:2114--2119, 2000.

\bibitem{WeaverPao81}
R.~L. Weaver and Y.-H. Pao.
\newblock Dispersion relations for linear wave propagation in homogeneous and
  inhomogeneous media.
\newblock {\em J. Acoust. Soc. Am.}, 22:1909--1918, 1981.

\bibitem{WhittakerWatson}
E.~T. Whittaker and G.~N. Watson.
\newblock {\em A Course of Modern Analysis}.
\newblock Cambridge University Press, Cambridge, 1927.
\newblock 4-th ed.

\bibitem{WidderLT}
D.~V. Widder.
\newblock {\em The {L}aplace {T}ransform}.
\newblock Princeton University Press, Princeton, 1946.

\bibitem{Zolotarev2}
V.~M. Zolotarev.
\newblock {\em Modern Theory of Summation of Random Variables}.
\newblock VSP, Utrecht, 1997.

\end{thebibliography}
\end{document}